\definecolor{darkblue}{rgb}{0,0,1}
\newcommand{\g}{\text{g}}
\newcommand{\dt}{\text{d}}
\theoremstyle{plain}
\newtheorem{theorem}{Theorem}
\newtheorem{counter}{Counter Example}
\newcommand{\descref}[1]{%
	\ref{#1}*%
}
\definecolor{amber}{rgb}{1.0, 0.49, 0.0}
\theoremstyle{definition}
\newcommand{\m}{{\text{m}}}
\newcommand{\M}{{\text{M}}}
\newcommand{\argmin}[1]{\underset{#1}{\text{argmin}}\;}
\def\zb{\mathbf{z}}
\def\z*{\zb_t^*}
\begin{document}
%
\title{Revisiting "Consensus-Based Energy-Management in Smart Grid with Transmission Losses and Directed Communication"}
%
%
%

\author{Jan Zimmermann, Tatiana Tatarenko, Volker Willert and J\"urgen Adamy
	\thanks{The authors are with the Control Theory and Robotics Lab of the Department of Electrical Engineering, Technical University of Darmstadt, 64283, Germany.
	}
	\thanks{This work was funded by the Deutsche Forschungsgemeinschaft (DFG, German Research Foundation) - SPP 1984.}
} 
\maketitle

\begin{abstract}
We discovered a deficiency in Algorithm 1 and Theorem 3 of \cite{Zhao2017}. The algorithm called CEMA aims to solve an energy management problem distributively. However, by means of a counter example, we show that Theorem 2 and 3 of \cite{Zhao2017} contradict each other in the case of a valid scenario, proving that the suggested algorithm does not always find the optimum. Furthermore, we provide theoretic results, showing that Theorem~3 of \cite{Zhao2017} does not hold generally. At last, we provide a rectification by adjusting the algorithm and the corresponding proof of Theorem 3. 
\end{abstract}


%
\IEEEpeerreviewmaketitle


\section{Introduction}
In \cite{Zhao2017}, the authors present an incremental cost algorithm that is applied to a distributed economic dispatch problem in which power consumers can vary their demand in some range. All instances involved in the optimization process, power generators and consumers, are represented by nodes in a network. All these nodes are connected by directed communication channels to exchange information. The problem can be formally defined as follows:
\begin{subequations}\label{eq:optprob}
	\begin{align}
		\min & \sum_{i \in \mathcal{V}_\g} C_i(P_i) -  \sum_{j \in \mathcal{V}_\dt} U_j(P_j), \\
		\text{s.t.} & \sum_{i \in \mathcal{V}_\g} (P_i - B_iP_i^2) =  \sum_{j \in \mathcal{V}_\dt} P_j, \label{eq:powerbalance} \\
		 & P_i^\m \leq P_i \leq P_i^\M, i \in \mathcal{V}_\g, \label{eq:boundsgen}\\
		 & P_j^\m \leq P_j \leq P_j^\M, j \in \mathcal{V}_\dt.  \label{eq:boundscon}
	\end{align}
\end{subequations}
Here, $\mathcal{V}_\g$ denotes the set of all generator nodes and $\mathcal{V}_\dt$ the set of all consumer nodes. The entire node set can be described by their union $\mathcal{V} = \mathcal{V}_\g \cup \mathcal{V}_\dt$. In above formulation of the problem, the convex function $C_i(P_i)$ represents the $i$-th generators cost to produce the power $P_i$. Analogously, $U_j(P_j)$ is some concave utility function of the power consumer $j$. Transmission losses at the generators are considered by the factor $B_iP_i^2$ in the power balance constraint \eqref{eq:powerbalance}, where $B_i > 0$ and chosen such that $B_iP_i^2 < P_i$ in the range $P_i^\m \leq P_i \leq P_i^\M$. At last, in \eqref{eq:boundsgen} and \eqref{eq:boundscon}, lower and upper bounds are imposed on power generation and consumption, with $0 < P_i^\m$ denoting the minimal and $P_i^\M$ the maximal production or consumption of node $i$. \\
\begin{algorithm}[h!]\label{alg:rectalg1}
	\SetAlgorithmName{Algorithm 1 (1*)}{algorithm 1 (1*)}{List of Algorithms}
	\renewcommand{\thealgocf}{}
	\SetAlgoLined
	\KwResult{$\lambda_i(k)$, $P_i(k), \forall i \in \mathcal{V}$.}
	\textbf{Initialization:} Set $\lambda_i(0)$, $P_i(0)$ and $\xi_i(0)$, $\forall i \in \mathcal{V}$ as 
	\begin{align}
	\lambda_i(0) &= \begin{cases}
	\frac{C_i'(P_i^\m)}{(1 - 2 B_i P_i^\m)}, &i \in \mathcal{V}_\g, \\
	U_i'(P_i^\M), & i \in \mathcal{V}_\dt.
	\end{cases}\\
	P_i(0) &= 0, \ i \in \mathcal{V}.\\
	\xi_i(0) &= 0, \ i  \in \mathcal{V}. 
	\end{align}
	Set  $\eta$, $0 < \eta < 1$, and termination errors $\epsilon_m$ and $\epsilon_l$. \\
	\textbf{Iteration}: \\
	1) Update $\lambda_i$ according to 
	\begin{equation}
	\lambda_i(k+1) = \sum_{j \in \mathcal{V}} w_{ij} \lambda_j(k) + \eta \xi_i(k), \ i \in \mathcal{V}. \label{eq:lambdaupdate} 
	\end{equation} \\
	2) For $i \in \mathcal{V}_\g$, update $P_i$ according to
	\begin{align}
	P_i(k+1) &= \argmin{P_i^\m \leq P_i(k) \leq P_i^\M} \big[C_i(P_i(k)) \nonumber \\ 
	& - \lambda_i(k+1)  P_i(k)     \big]; \label{eq:oldline} 
	\end{align}\\
	2*) For $i \in \mathcal{V}_\g$, update $P_i$ according to
	\begin{align}
	P_i(k+1) &= \argmin{P_i^\m \leq P_i(k) \leq P_i^\M} \big[C_i(P_i(k)) \nonumber \\ 
	&- \lambda_i(k+1)\left(  P_i(k) - B_iP_i(k)^2  \right)   \big]; \label{eq:correcedline}\tag{\ref{eq:oldline}*}
	\end{align}
	For $i \in \mathcal{V}_\dt$, update $P_i$ according to
	\begin{align}
	P_i(k+1) &= \argmin{P_i^\m \leq P_i(k) \leq P_i^\M} \big[ \lambda_i(k+1)P_i(k) \nonumber \\ &
	- U_i(P_i(k)) \big];
	\end{align}\\
	3)  For $i \in \mathcal{V}_\g$, update $\xi_i$ according to 
	\begin{align}
	\xi_i(k+1) &= \sum_{j \in \mathcal{V}} q_{ij} \xi_j(k) + \left( P_i(k) - B_iP_i(k)^2 \right) \nonumber \\
	&- \left(P_i(k+1) - B_iP_i(k+1)^2\right); 
	\end{align}
	For $i \in \mathcal{V}_\dt$, update $\xi_i$ according to
	\begin{align}
	\xi_i(k+1) = \sum_{j \in \mathcal{V}} q_{ij} \xi_j(k) +  P_i(k+1) - P_i(k).
	\end{align}\\
	4) If $|\xi_i(k)|\leq \epsilon_m, \forall i \in \mathcal{V}$ and $|\lambda_i(k) - \lambda_i(k-1)| \leq \epsilon_l$, $ \forall i \in \mathcal{V}$, break.\\
	
	\vspace{0.2cm}
	\caption{CEMA (*: Adjusted CEMA) }
\end{algorithm}
Theorem 1 of \cite{Zhao2017} provides the convex reformulation
\begin{equation}
	\sum_{i \in \mathcal{V}_\g} (P_i - B_i P_i^2) \geq \sum_{j \in \mathcal{V}_\dt} P_j \label{eq:convexpowerbalance}
\end{equation}
of the non-convex equality condition in \eqref{eq:powerbalance} and proves that the convex problem and  the non-convex problem share the same optimal value  $P^*$. To solve the obtained convex problem, a consensus-based energy management algorithm (CEMA) is presented in \cite{Zhao2017} and can also be found as Algorithm 1 in this publication. In order to distinguish between the versions from \cite{Zhao2017} and our rectification, we use *. Therefore, equation \eqref{eq:oldline} refers to the update equation of \cite{Zhao2017} and \eqref{eq:correcedline} refers to our suggestion. 
 Theorem 2 of \cite{Zhao2017} provides the proof that Algorithm 1 converges to some vector $\hat{P}^*$. Then, in Theorem 3 of \cite{Zhao2017}, the goal is to  show that this final vector  $\hat{P}^*$ produced by Algorithm 1 is equal to the optimal vector $P^*$ of Problem \eqref{eq:optprob}. We refer the reader to the original paper \cite{Zhao2017} for further details. \\
In what follows, we will first show in Section II that Theorem 2 and Theorem 3 contradict each other in case of a valid counter example, i.e. that the results of Theorem 2 and 3 do not always hold. In Section III, we will show that the proof of Theorem 3 is incorrect in relation to the presented algorithm in \cite{Zhao2017}. In Section IV, we provide an adjustment of equation \eqref{eq:oldline}, resulting in equation \eqref{eq:correcedline}, and a rectification of Theorem 3.

\section{Counter example}
In this section, we provide a counter example that shows that under Algorithm 1, including equation \eqref{eq:oldline}, the results of Theorem 2 and Theorem 3 contradict each other. We choose a simple example of two generators and two consumers. The cost functions for the two node types are defined as in \cite{Zhao2017} with
\begin{align}
	C_i(P_i) &= a_i P_i^2 + b_i P_i + c_i, \\
	U_j(P_j) &= \begin{cases}
	w_j P_j - \alpha_j P_j^2, & P_j \leq \frac{w_j}{2\alpha_j}, \\
	\frac{w_j^2}{4\alpha_j}, & P_j > \frac{w_j}{2\alpha_j}.
	\end{cases}
\end{align}
All parameters defining the cost and utility functions are considered to be positive. Their parameterisation of the considered scenario is summarized in Table \ref{tab:paramgen}. 

\begin{table}[h]
	\centering
	\begin{tabular}{c|c|c|c|c|c|c}
		                                                 \multicolumn{7}{c}{Generator parameters}                                                   \\
		\rule{0pt}{10pt}	ID,$ i$ &  $a$   &         $b$          & $c$ &           $B$           & $P^{\text{m}}$ &                  $P^{\text{M}}$                  \\ \hline\hline
		           1             & 0,0024 &         5,56         & 30  &         0,00021         &  60   &                 339,69                  \\ \hline
		           2             & 0,0056 &         4,32         & 25  &         0,00031         &  25   &                 479,10                  \\
		                                                           \multicolumn{7}{c}{}                                                            \\
		                                                  \multicolumn{7}{c}{Consumer parameters}                                                   \\
		\rule{0pt}{10pt}	ID, $j$ & \multicolumn{2}{c|}{$\omega$} & \multicolumn{2}{c|}{$\alpha$} & $P^\m$ &                  $P^\M$                  \\ \hline\hline
		           1             &  \multicolumn{2}{c|}{18,43}   &  \multicolumn{2}{c|}{0,0545}  &  50   &                 100,34                  \\ \hline
		           2             &  \multicolumn{2}{c|}{13,17 }  &  \multicolumn{2}{c|}{0,0877}  &  100  & 159,13     
		           \vspace{0.1cm}
	\end{tabular} 
	\caption{Parameterization of generator and consumer agents.}
	\label{tab:paramgen}
\end{table}

Theorem 2 of \cite{Zhao2017} states that, using Algorithm 1, the power mismatch converges to zero:
\begin{equation}
	\lim_{k \rightarrow \infty} \left[\sum_{j \in \mathcal{V}_\dt} P_j(k) - \sum_{i \in \mathcal{V}_\g} (P_i(k)- B_iP_i^2(k))\right] = 0
\end{equation}
and the distributed Lagrange multipliers of the power mismatch $\lambda_i(k)$, namely the incremental cost or utility of agent $i$ updated by equation \eqref{eq:lambdaupdate}, converges to a consensus value:
\begin{equation}
	\lim_{k \rightarrow \infty}\lambda_i(k)  = \lambda_c, \ \forall i \in \mathcal{V}. 
\end{equation}
Theorem 3 provides proof that the result $\hat P^*$ of Algorithm 1  is equal to the optimal solution $P^*$ of the convex optimization problem. In what follows, we will briefly show that the convergence of $\lambda_i(k)$ to a consensus value $\lambda_c$ and the statement of Theorem 3 contradict each other for the scenario shown in Table \ref{tab:paramgen}. In our analysis, we will focus exclusively on the optimal values of the generators.
\begin{counter}
	Using Algorithm 1 and the valid scenario in Table \ref{tab:paramgen}, the statements of Theorem 2 and 3 in \cite{Zhao2017} contradict each other. 
\end{counter}
\begin{proof}
Solving the problem defined in \eqref{eq:optprob} using parameters of Table \ref{tab:paramgen} and an appropriate solution method, we receive the optimal values 
\begin{align}
	P_\g^* = \begin{bmatrix}
	81.98, 124.80
	\end{bmatrix}^T.
\end{align}
Assuming that all $\lambda_i(k)$  converge to $\lambda_c$, i.e. Theorem 2 is true,  and the power mismatch is $0$, the optimal value of the algorithm can be calculated as
\begin{equation}
	\hat{P}_i^* = \argmin{P_i}  C_i(P_i) - \lambda_c P_i, \ i \in \mathcal{V}_\g,
\end{equation}
according to equation \eqref{eq:oldline} of Algorithm 1. A necessary and sufficient condition for some value $P_i$ to be the optimum of the problem is that $P_i$ renders the first derivation of above objective function to zero. 
Therefore, setting the first derivative to zero and reordering, we receive
\begin{equation}\label{eqn:lambda_c}
	\lambda_c = \frac{\delta C_i(\hat{P}_i)}{\delta \hat P_i}\bigg|_{\hat P_i = \hat P_i^*} = 2 a_i \hat{P}_i^* + b_i.
\end{equation}
Now, if we assume that the solution of the algorithm and the solution of the convex problem are identical $\hat{P}^* = P^*$, i.e. the statement of Theorem 3 is true, we receive the following values for $\lambda_c$ by inserting the corresponding parameters and optimal values in \eqref{eqn:lambda_c}:
\begin{align}
	\text{Generator 1: }  &  \lambda_c^1 = 5.95,\\
	\text{Generator 2: }  &  \lambda_c^2 = 5.72
\end{align}
Therefore,  $\lambda_c^1 \neq \lambda_c^2$, which contradicts the statement of Theorem 2. 
\end{proof}

\section{Analysis of the Proof of Theorem 3}
By further inspection, it can be shown that the proof of Theorem 3 is erroneous, which means that, using Algorithm 1, the received solution $\hat P^*$ does not match the solution $P^*$ of Problem \eqref{eq:optprob}. To be more precise, the proof is sound for the update equations of the consumers, but does not hold for the generators.  Therefore, we focus in the following analysis on the proof for the generators  and show that
\begin{theorem}\label{thm:counter}
	Even if all assumptions of Theorem 3 of \cite{Zhao2017} hold, the solution $\hat{P}_i^*$, $i \in \mathcal{V}_\g$, obtained by Algorithm 1, does not satisfy the necessary and sufficient Karush-Kuhn-Tucker (KKT) conditions of the convex problem, given  $B_i > 0$ and $\hat{P}_i^*> 0$. Therefore, Theorem 3 of \cite{Zhao2017} does not hold generally. 
\end{theorem}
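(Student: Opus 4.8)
The plan is to contrast two first-order conditions at a generator's converged power: the Karush--Kuhn--Tucker (KKT) stationarity of the relaxed convex problem with constraint \eqref{eq:convexpowerbalance}, and the fixed-point condition that Algorithm 1 enforces through the uncorrected update \eqref{eq:oldline}. The key observation is that \eqref{eq:oldline} never sees the loss term $B_iP_i^2$, so it reproduces the optimality condition of a loss-free problem rather than of \eqref{eq:convexpowerbalance}.

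First I would set up the KKT system. Attaching a multiplier $\lambda^* \geq 0$ to \eqref{eq:convexpowerbalance} and multipliers $\underline{\mu}_i, \overline{\mu}_i \geq 0$ to the bounds \eqref{eq:boundsgen}, stationarity of the Lagrangian in $P_i$, $i \in \mathcal{V}_\g$, reads
\begin{equation}
C_i'(\hat P_i^*) - \lambda^*\left(1 - 2 B_i \hat P_i^*\right) - \underline{\mu}_i + \overline{\mu}_i = 0 .
\end{equation}
Because $C_i$ is convex and, by Theorem 1 of \cite{Zhao2017}, the relaxed problem is equivalent to \eqref{eq:optprob}, these conditions are both necessary and sufficient for optimality. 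For a generator operating strictly inside its box, complementary slackness gives $\underline{\mu}_i = \overline{\mu}_i = 0$, leaving
\begin{equation}\label{eq:plan-kkt}
\lambda^* = \frac{C_i'(\hat P_i^*)}{1 - 2 B_i \hat P_i^*},
\end{equation}
where the denominator is positive in the admissible range since the losses satisfy $B_iP_i^2 < P_i$.

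Next I would read off the fixed point of Algorithm 1. Under the hypotheses of Theorem 3, Theorem 2 of \cite{Zhao2017} yields $\lambda_i(k) \to \lambda_c$ for all $i$, and $\lambda_c$ is precisely the Lagrange multiplier of the power mismatch, so $\lambda_c = \lambda^*$. At convergence $\hat P_i^*$ minimizes the box-constrained convex objective of \eqref{eq:oldline} at $\lambda_c$; for an interior minimizer the first-order condition is
\begin{equation}\label{eq:plan-alg}
\lambda_c = C_i'(\hat P_i^*),
\end{equation}
with no loss term present. Substituting \eqref{eq:plan-alg} together with $\lambda_c = \lambda^*$ into \eqref{eq:plan-kkt} gives $C_i'(\hat P_i^*) = C_i'(\hat P_i^*)/(1 - 2 B_i \hat P_i^*)$. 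Since $C_i'(\hat P_i^*) = \lambda_c > 0$, I may cancel it and obtain $1 - 2 B_i \hat P_i^* = 1$, i.e. $B_i \hat P_i^* = 0$, contradicting $B_i > 0$ and $\hat P_i^* > 0$. Hence $\hat P_i^*$ violates \eqref{eq:plan-kkt}, so Theorem 3 cannot hold generally.

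The main obstacle I anticipate is the case in which the optimum lies on a bound, where \eqref{eq:plan-kkt} and \eqref{eq:plan-alg} become one-sided complementarity conditions and the clean cancellation is unavailable. To handle this I would compare the active-set threshold of the projection in \eqref{eq:oldline} (which clamps according to the loss-free price $1$) with the KKT complementarity of \eqref{eq:convexpowerbalance} (whose effective price is $1 - 2 B_i \hat P_i^*$), and argue that the mismatched prices shift the clamping threshold, so that generically the algorithm either clamps at the wrong endpoint or returns an interior point failing \eqref{eq:plan-kkt}. A secondary subtlety is justifying $\lambda_c = \lambda^*$; if one prefers to avoid this identification, the same contradiction can be reached by invoking two interior generators with $B_1 \hat P_1^* \neq B_2 \hat P_2^*$, for which \eqref{eq:plan-alg} forces the single consensus value $C_1'(\hat P_1^*) = C_2'(\hat P_2^*) = \lambda_c$ while \eqref{eq:plan-kkt} demands $C_i'(\hat P_i^*) = \lambda^*\left(1 - 2 B_i \hat P_i^*\right)$ with generator-dependent factors, which is impossible for a positive multiplier. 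This is exactly the obstruction exhibited numerically in Section II.
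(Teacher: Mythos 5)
Your proposal is correct and follows essentially the same route as the paper's proof: take the limit of the uncorrected update \eqref{eq:oldline} under the consensus $\lambda_i(k)\to\lambda_c$ from Theorem~2, extract the interior first-order condition $\lambda_c = C_i'(\hat P_i^*)$, substitute into the generator stationarity condition \eqref{eq:lagranezerogen}, and obtain the contradiction $B_i \hat P_i^* = 0$ (the paper writes this equivalently as $C_i'(\hat P_i^*)\, 2B_i\hat P_i^* = 0$, noting the same degenerate cases $C_i'(\hat P_i^*)=0$ or $\hat P_i^*=0$ that your positivity assumption excludes). Your two additional remarks --- justifying the identification $\lambda_c = \lambda^*$ via a second interior generator, and the boundary-case discussion --- go slightly beyond the paper, which simply sets $\lambda = \lambda_c$ in the KKT system and restricts to the interior case, but they do not change the substance of the argument.
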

\begin{proof}
	The convex problem is defined in \eqref{eq:optprob}, where constraint \eqref{eq:powerbalance} is replaced with the convex constraint \eqref{eq:convexpowerbalance}.
	For the sake of understanding, we present the Lagrangian of the convex optimization problem and the necessary and sufficient KKT conditions from \cite{Zhao2017}:
	\begin{align*}
		\mathcal{L}(P, \lambda, \gamma, \nu) &= \sum_{i \in \mathcal{V}_\g} C_i(P_i) - \sum_{j \in \mathcal{V}_\dt} U_j(P_j) \\
		& + \lambda\left(\sum_{j \in \mathcal{V}_\dt}P_j + \sum_{i \in \mathcal{V}_\g}(B_iP_i^2 - P_i)\right) \\
		& + \sum_{i \in \mathcal{V}} \gamma_i(P_i^\m - P_i) + \sum_{i \in \mathcal{V}} \nu_i (P_i - P_i^\M).
	\end{align*}
	The corresponding KKT conditions read as follows:
	\begin{subequations}\label{eq:KKT-conditions}
	\begin{align}
		  &\frac{\delta C_i(P_i)}{\delta P_i}\bigg|_{P_i = P_i^*} - \lambda (1- 2B_iP_i^*) - \gamma_i + \nu_i = 0,                                     \forall i \in \mathcal{V}_\g \label{eq:lagranezerogen}\\
		  &\lambda - \gamma_j + \nu_j - \frac{\delta U_j(P_j)}{\delta P_j}\bigg|_{P_j = P_j^*} = 0,                                                     \forall j \in \mathcal{V}_\dt, \\
		  &\lambda \Bigg(\sum_{j \in \mathcal{V}_\dt}  P_j^* + \sum_{i \in \mathcal{V}_\g}\left(B_i(P_i^*)^2 - P_i^* \right) \Bigg) = 0,  \lambda \geq 0,                                               \\
		  &\gamma_i(P_i^\m - P_i^*) = 0,   \gamma_i \geq 0,                                                                                     \forall i \in \mathcal{V},                           \\
		  &\nu_i (P_i^* - P_i^\M) = 0,                                                                                               \nu_i \geq 0,             \forall i \in \mathcal{V}.
	\end{align}
\end{subequations}
	We will show now  that the result $\hat P_i^*, \ i \in \mathcal{V}_\g$ of the Algorithm 1  does not always render the first derivative of the Lagrangian function of the convex problem to zero, i.e. condition \eqref{eq:lagranezerogen} does not hold.\\ 
	Notice that with the result $\lim_{k \rightarrow \infty} \lambda_i(k) = \lambda_c$, $i \in \mathcal{V}_\g$ of Theorem 2 of \cite{Zhao2017}, we receive 
	\begin{align*}
		& \lim_{k \rightarrow \infty}P_i(k) =\\
		& \argmin{P_i^\m \leq P_i(k) \leq P_i^\M} \left[ C_i(\lim_{k \rightarrow \infty} P_i(k))  - \lambda_c \lim_{k \rightarrow \infty} P_i(k) \right] = \hat{P}_i^*.
	\end{align*}
	as the limit of equation \eqref{eq:oldline}.
	The resulting final value of above expression is equal to the solution of the optimization problem
	\begin{subequations}
		\begin{align}
			&\hat{P}_i^* = \argmin{P_i} C_i(P_i) - \lambda_c P_i \\
			&\text{s.t. } P_i^\m \leq P_i \leq P_i^\M. 
		\end{align}
	\end{subequations}
	Now assume that $P_i^\m < \hat{P}^* < P_i^\M$ holds for the optimal solution. After setting the first derivative to zero, we receive
	$\lambda_c = \frac{\delta C_i(\hat{P}_i)}{\delta \hat{P}_i}\big|_{\hat P_i = \hat P_i^*}$. Inserting this result in the KKT condition \eqref{eq:lagranezerogen}, it follows that
	\begin{equation}
		\frac{\delta C_i(\hat{P}_i)}{\delta \hat{P}_i}\bigg|_{\hat P_i = \hat P_i^*} 2B_i \hat{P}_i^* \overset{!}{=} 0.
	\end{equation}
	With this, the KKT condition \eqref{eq:lagranezerogen} only holds for some special cases, where either the optimum of all $C_i(P_i)$, $i \in \mathcal{V}$ coincide with the optimal solution of Problem \eqref{eq:optprob} or where $\hat{P}_i^* = 0$. 
	Therefore, for the general case the necessary condition \eqref{eq:lagranezerogen} does not hold.
\end{proof}

\section{Addition to Algorithm 1}

From the insight of the counter example and the findings of the previous section, we suggest the following addition to \cite{Zhao2017}'s algorithm, where we add the term $-B_iP_i(k)^2$ to the update equation for $P_i(k+1), \ \forall i \in \mathcal{V}_\g$ in equation \eqref{eq:oldline}  and receive equation \eqref{eq:correcedline}, resulting in Algorithm \descref{alg:rectalg1}. 
Furthermore, we extend Theorem 3 of \cite{Zhao2017} to the case of described addition.

\begin{theorem}
	Provided that
		$\sum_{j \in \mathcal{V}_\dt} P_j^\M \geq \sum_{i \in \mathcal{V}_\g} \big( P_i^\m - B_i(P_i^\M)^2 \big),$
	 i.e., the solution of the reformulated convex problem equals the solution of the non-convex problem in \eqref{eq:optprob}, the rectified Algorithm \descref{alg:rectalg1}, including update equation \eqref{eq:correcedline}, guarantees that
	$	\lim_{k \rightarrow \infty} P(k) = \hat{P}^* = P^*$.

\end{theorem}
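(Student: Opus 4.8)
The plan is to verify that the limit point $\hat{P}^*$ of the rectified Algorithm 1* satisfies the complete KKT system \eqref{eq:KKT-conditions} of the convex problem. Because these conditions are necessary and sufficient under convexity, this identifies $\hat{P}^*$ with the convex optimum; the hypothesis then furnishes, via Theorem~1 of \cite{Zhao2017}, the coincidence of the convex and non-convex optima, giving $\hat{P}^* = P^*$. As a first step I would confirm that the convergence conclusions of Theorem~2 of \cite{Zhao2017}---consensus $\lambda_i(k)\to\lambda_c$ for all $i\in\mathcal{V}$ together with a vanishing power mismatch---carry over to Algorithm 1*. The only change is in the generator update, and the consensus/mismatch argument depends only on the boundedness of $P_i(k)$ in $[P_i^\m,P_i^\M]$, the continuity of the update map, and the structure of the coupled $\lambda_i$/$\xi_i$ recursions, all of which are preserved; moreover the replaced objective $C_i(P_i)-\lambda_c(P_i-B_iP_i^2)$ stays strictly convex on the operating range for $\lambda_c\geq 0$, since $C_i$ is convex and $B_i>0$.

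The decisive step is to pass to the limit in the corrected update \eqref{eq:correcedline}. Using $\lambda_i(k)\to\lambda_c$ and continuity of the argmin, the generator limit solves
\begin{equation*}
\hat{P}_i^* = \argmin{P_i^\m \leq P_i \leq P_i^\M}\left[\,C_i(P_i)-\lambda_c\left(P_i-B_iP_i^2\right)\right],\qquad i\in\mathcal{V}_\g.
\end{equation*}
The scalar, box-constrained optimality condition for this program reproduces \eqref{eq:lagranezerogen} exactly. In the interior case $P_i^\m<\hat{P}_i^*<P_i^\M$, stationarity gives $C_i'(\hat{P}_i^*)-\lambda_c(1-2B_i\hat{P}_i^*)=0$, which is \eqref{eq:lagranezerogen} with $\gamma_i=\nu_i=0$; in the two active-bound cases one reads the nonnegative multiplier $\gamma_i$ or $\nu_i$ directly off the sign of the subproblem derivative $C_i'(\hat{P}_i^*)-\lambda_c(1-2B_i\hat{P}_i^*)$, again satisfying \eqref{eq:lagranezerogen} jointly with the box complementary-slackness conditions. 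This is precisely where the added term $-B_iP_i(k)^2$ does its work: it supplies the factor $(1-2B_i\hat{P}_i^*)$ that was absent under \eqref{eq:oldline}, curing the defect exhibited in Theorem~\ref{thm:counter}.

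I would then import, unchanged, the consumer part of the original proof of Theorem~3, since \eqref{eq:correcedline} leaves the consumer update intact and its limit already yields the consumer stationarity condition. The vanishing mismatch from Theorem~2 supplies the active balance relation $\sum_{j\in\mathcal{V}_\dt}\hat{P}_j^*+\sum_{i\in\mathcal{V}_\g}\left(B_i(\hat{P}_i^*)^2-\hat{P}_i^*\right)=0$, so the complementary-slackness identity for \eqref{eq:convexpowerbalance} holds. Invoking the hypothesis $\sum_{j\in\mathcal{V}_\dt}P_j^\M\geq\sum_{i\in\mathcal{V}_\g}\left(P_i^\m-B_i(P_i^\M)^2\right)$, the equality supply $=$ demand is attainable, forcing the relaxed inequality to be binding at the optimum, hence $\lambda_c>0$ and, by Theorem~1 of \cite{Zhao2017}, equality of the convex and non-convex optima. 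Collecting the verified stationarity, feasibility, complementary-slackness and sign conditions shows $\hat{P}^*$ satisfies \eqref{eq:KKT-conditions}, so $\hat{P}^*=P^*$.

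The main obstacle I anticipate is twofold. First, the rigorous transfer of Theorem~2 to Algorithm 1*: one must check that substituting \eqref{eq:oldline} by \eqref{eq:correcedline} does not disturb the Lyapunov/contraction mechanism behind the consensus and zero-mismatch claims, and in particular that $\lambda_c\geq 0$ is maintained so that the modified generator objective remains convex throughout the iterations. Second, the careful bookkeeping of the boundary cases---matching the projection-induced multipliers of each scalar subproblem to the $\gamma_i,\nu_i$ in \eqref{eq:KKT-conditions} and verifying their nonnegativity and complementary slackness. The interior case is immediate once the correction term is in place; aligning the active-constraint cases with \eqref{eq:KKT-conditions} is the more delicate part.
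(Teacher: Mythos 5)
Your proposal is correct and follows essentially the same route as the paper's own proof: pass to the limit in the corrected update \eqref{eq:correcedline}, identify $\hat{P}_i^*$ as the solution of the box-constrained scalar subproblem $\min_{P_i^\m \leq P_i \leq P_i^\M} C_i(P_i) - \lambda_c(P_i - B_iP_i^2)$, verify the KKT condition \eqref{eq:lagranezerogen} separately in the interior and active-bound cases via the local multipliers $\gamma_i, \nu_i$, and import the consumer part from \cite{Zhao2017} unchanged. If anything, you are more careful than the paper, which silently assumes the transfer of Theorem~2 to Algorithm~1* and the complementary-slackness of the balance constraint that you make explicit (only your side remark that a binding constraint forces $\lambda_c > 0$ is loose --- complementary slackness needs merely the product to vanish, which the zero mismatch already delivers together with $\lambda_c \geq 0$).
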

\begin{proof}
	As the changes from Algorithm 1 to Algorithm \descref{alg:rectalg1} only affect the generators, we adopt the proof for the consumers from \cite{Zhao2017}. 
	For the generators,  we consider again the converged case as we did for the proof of Theorem \ref{thm:counter}, i.e. we focus on the optimization problem 
	\begin{subequations}
		\begin{align}
			&\min_{P_i} C_i(P_i) - \lambda_c (P_i - B_iP_i^2), \\
			&\text{s.t. } P_i^\m \leq P_i \leq P_i^\M,
		\end{align}
	\end{subequations}
	which is solved by every generator agent $i \in \mathcal{V}_\g$, see equation \eqref{eq:correcedline},
	and the corresponding Lagrangian function 
	\begin{align}
		\mathcal{L}_i(P_i, \gamma_i, \nu_i) &= C_i(P_i) - \lambda_c (P_i - B_iP_i^2)  \nonumber\\
		&+ \gamma_i(P_i^\m - P_i)  +\nu_i(P_i - P_i^\M). \label{eq:localLagrangian}
	\end{align}
	First, we deal with the case that $P_i^\m < \hat{P}_i^* < P_i^\M$. Then $\gamma_i = \nu_i = 0$ and 
	\begin{equation}
		\lambda_c = \frac{\frac{\delta C_i(P_i)}{\delta P_i}\big|_{P_i = P_i^* }}{(1-2B_iP_i^*)}.
	\end{equation}
	
	Setting $\lambda = \lambda_c$ and inserting above result for $\lambda_c$ in KKT condition \eqref{eq:lagranezerogen}, together with $\gamma_i = \nu_i = 0$, shows that the KKT conditions for the convex problem are satisfied for this case.\\
	Consider now the case that $\hat{P}_i^* = P_i^\m$ ($\hat{P}_i^* = P_i^\M$). Then, we have $\gamma_i \geq 0, \ \nu_i = 0$ ($\gamma_i = 0, \nu_i \geq 0$). Setting the first derivative of the Lagrangian \eqref{eq:localLagrangian} to zero and solving for $\lambda_c$ results in $\lambda_c = \left(\frac{\delta C_i(P_i)}{\delta P_i}\big|_{P_i = P_i^*} - \gamma_i \right) / (1- 2B_i \hat P_i^*)$  $\left( \lambda_c = \left(\frac{\delta C_i(P_i)}{\delta P_i}\big|_{P_i = P_i^*} + \nu_i \right) / (1- 2B_i \hat P_i^*)\right)$. Setting $\lambda = \lambda_c$ and inserting $\lambda_c$ into the KKT condition \eqref{eq:lagranezerogen}, one can verify that the KKT conditions for the convex problem are met by $\lambda_c$ and $ \hat P_i^*$,  $\forall i \in \mathcal{V}_\g$. Therefore, together with the proof for the consumers in \cite{Zhao2017}, it holds that
	\begin{equation}
		\lim_{k \rightarrow \infty} P_i(k)	= \hat P_i^* = P_i^*, \ \forall i \in \mathcal{V}.  
	\end{equation}
\end{proof}

\ifCLASSOPTIONcaptionsoff
  \newpage
\fi

\end{document}